\newtheorem{theorem}{Theorem}
\newtheorem{proposition}{Proposition}
\newtheorem{assumption}{Assumption}
\newtheorem{remark}{Remark}
\begin{document}
		
\title{To Retransmit or Not: Real-Time Remote Estimation in Wireless Networked Control}

\author{Kang Huang, Wanchun Liu$^\dagger$, Yonghui Li, and Branka Vucetic\\
	School of Electrical and Information Engineering, The University of 
	Sydney, Australia\\
	Emails:	\{kang.huang,\ wanchun.liu,\ yonghui.li,\ branka.vucetic\}@sydney.edu.au. 
}



\maketitle

\begin{abstract}
\let\thefootnote\relax\footnote{$^\dagger$\emph{W. Liu is the corresponding author.}
}	
	Real-time remote estimation is critical for mission-critical applications including industrial automation, smart grid, and the tactile Internet.
	In this paper, we propose a hybrid automatic repeat request (HARQ)-based real-time remote estimation framework for linear time-invariant (LTI) dynamic systems.
	Considering the estimation quality of such a system, there is a fundamental tradeoff between the reliability and freshness of the sensor's measurement transmission.
	When a failed transmission occurs, the sensor can either retransmit the previous old measurement such that the receiver can obtain a more reliable old measurement, or transmit a new but less reliable measurement.
	To design the optimal decision, we formulate a new problem to optimize the sensor's online decision policy, i.e., to retransmit or not, depending on both the current estimation quality of the remote estimator and the current number of retransmissions of the sensor, so as to minimize the long-term remote estimation mean-squared error (MSE).
	This problem is non-trivial.
	In particular, it is not clear what the condition is in terms of the communication channel quality and the LTI system parameters, to ensure that the long-term estimation MSE can be bounded.
	We give a sufficient condition of the existence of a stationary and deterministic optimal policy that stabilizes the remote estimation system and minimizes the MSE.
	Also, we prove that the optimal policy has a switching structure, and derive a low-complexity suboptimal policy.
	Our numerical results show that the proposed optimal policy notably improves the performance of the remote estimation system compared to the conventional non-HARQ policy.

\end{abstract}

	\section{Introduction}	
	Real-time remote estimation is critical for networked control applications such as
	industrial automation, smart grid, vehicle platooning, drone swarming, immersive virtual reality (VR) and the tactile Internet~\cite{Antonakoglou2018towards}.
	For such real-time applications, high-quality remote estimation of the states of dynamic processes over unreliable links is a major challenge.
	The sensor's sampling policy, the estimation scheme at a remote receiver, and the communication protocol for state-information delivery between the sensor and the receiver should be designed jointly.

	
	To enable optimal design of wireless {remote estimation}, the performance metric for the remote estimation system needs to be selected properly. 
	For some applications, the model of the dynamic process under monitoring is unknown and the receiver is not able to estimate the current state of the process based on the previously received states, i.e., a state-monitoring-only scenario~\cite{kaul2012real}. In this scenario, the performance metric is the age-of-information (AoI), which reflects how old the freshest received sensor measurement is, since the moment that measurement was generated at the sensor~\cite{kaul2012real}.
	However, in practice, most of the dynamic processes are time-correlated, and the state-changing rules can be known by the receiver to some extent. 
	Therefore, the receiver can estimate the current state of the process based on the previously received measurements and the model of the dynamic process (see e.g., \cite{schenato2008optimal,sun2017remote}), especially when the packet that carries the current sensor measurement is failed or delayed. 
	In this sense, the estimation mean-squared error (MSE) is the perfect performance metric.	
	

	From a communication protocol design perspective, we naturally ask: does a sensor need retransmission or not for mission-critical real-time remote estimation?
	Retransmission is required by conventional communication systems with non-real-time backlogged data to be perfectly delivered to the receivers.
	Also, energy-constrained remote estimation systems and the ones with low sampling rate can also benefit from retransmissions, see e.g.,~\cite{liu2016energy} and~\cite{Demirel2015to}. 
	It seems that retransmissions may not improve the performance of a mission-critical real-time remote estimation system~\cite{gupta2010estimation}, which is not mainly constrained by energy nor sampling rate,
	as it is a waste of transmission opportunity to transmit an out-of-date measurement instead of the current one.
	However, this is true only when a retransmission has the same success probability as a new transmission, e.g., with the standard automatic repeat request (ARQ) protocol.
	Note that a hybrid ARQ (HARQ) protocol, e.g., with a chase combining or incremental redundancy scheme, is able to effectively increase the successful detection probability of a retransmission by combining multiple copies from previously failed transmissions~\cite{caire2001throughput}.
	Therefore, a HARQ protocol has the potential to improve the performance of real-time remote estimation.
	However, to the best of our knowledge, HARQ has never been considered in the open literature of real-time remote estimation of a time-correlated dynamic process.
	

	In the paper, we introduce HARQ into real-time remote estimation systems and optimally design the sensor's transmission policy to minimize the estimation MSE.
	Note that there is a fundamental tradeoff between the reliability and freshness of the sensor's measurement transmission. When a failed transmission occurs, the sensor can either retransmit the previous old measurement such that the receiver can obtain a more reliable old measurement, or transmit a new but less reliable measurement.
	The main contributions of the paper are summarized as follows:
	\begin{itemize}
		\item We propose a novel HARQ-based real-time remote estimation system, where the sensor makes online decision to send a new measurement or retransmit the previously failed~one depending on both the current estimation quality of the receiver and the current number of retransmissions of the sensor.
		\item We formulate the problem to optimize the sensor's decision policy so as to maximize the long-term performance of the receiver in terms of the average MSE.
		Since it is not clear whether the long-term average MSE can be bounded or not,
		we give a sufficient condition in terms of the communication channel quality and the LTI system parameters
		to ensure that an optimal policy exists and stabilizes the remote estimation system.		
		\item We derive a structural property of the optimal policy, i.e., the optimal policy is a switching-type policy, and give an easy-to-compute suboptimal policy.
		Our numerical results show that the suboptimal policy can efficiently improve the system performance than the conventional non-HARQ policy, under the setting of practical system parameters.
		
	\end{itemize}


\section{System Model} \label{section2}
We consider a basic system setting that a \emph{smart sensor} periodically samples, pre-estimates and sends its local estimation of a dynamic process to a remote receiver through a wireless link with packet dropouts, as illustrated in Fig.~\ref{retransmit_system_model}.

\subsection{Dynamic Process Modeling}
   We consider a general discrete linear time-invariant (LTI) model for the dynamic process as (see e.g., \cite{yang2015deterministic,shi2012optimal,yang2013schedule})
   \begin{equation} \label{sys}
	\begin{aligned}
	   x_{k+1} &= {A}x_k + w_k,\\
	   y_k &= {C}x_k + v_k,
	\end{aligned}
   \end{equation}
     where 
     the discrete time steps are determined by the sensor’s sampling period $T_s$,
     $x_k \in \mathbb{R}^n$ is the process state vector, $A \in \mathbb{R}^{n \times n}$ is the state transition matrix, $y_k \in \mathbb{R}^m$ is the measurement vector of the smart sensor attached to the process, $C \in \mathbb{R}^{m \times n}$ is the measurement matrix\footnote{Note that $C$ is not necessary to be full rank~\cite{maybeck1979stochastic}, as illustrated in Fig.~\ref{sys}, i.e., $x_k$ is a two-dimensional (2D) signal, while the measurement $y_k$ is one-dimensional. After Kalman filtering, we have a 2D $\hat{x}^s_k$.}, $w_k \in \mathbb{R}^n$ and $v_k \in \mathbb{R}^m$ are the process and measurement noise vectors, respectively. We assume $w_k$ and $v_k$ are independent and are identically distributed (i.i.d.) zero-mean Gaussian processes with corresponding covariance matrices $Q$ and $R$, respectively. The initial state $x_0$ is zero-mean Gaussian with covariance matrix $\Sigma_0$.     
     To avoid trivial problems, we assume that $\rho^2(A) > 1$, where $\rho^2(A)$ is the maximum squared eigenvalue of $A$~\cite{shi2012scheduling}. 

     \begin{figure}[t]
      \centering\includegraphics[scale=0.6]{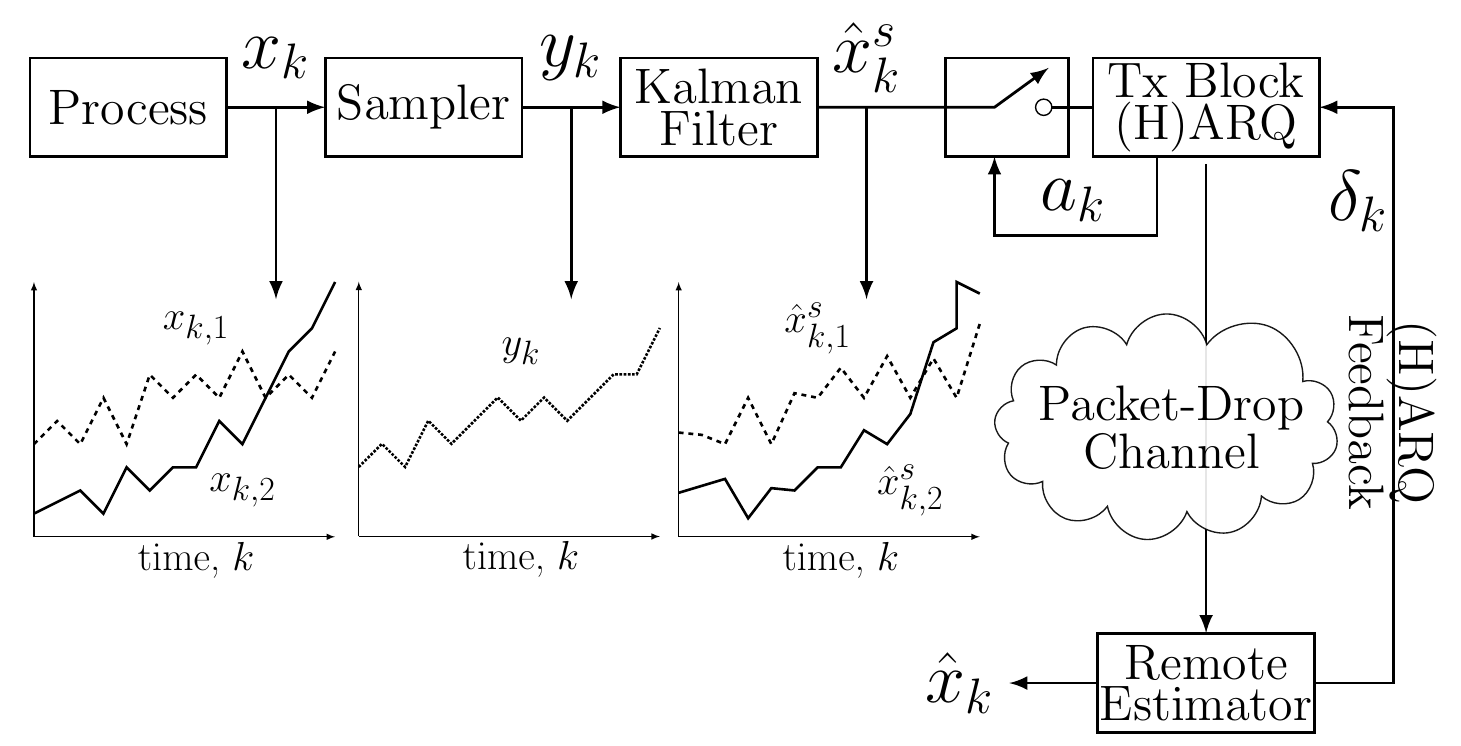}
     	\caption{Proposed remote estimation system with HARQ, where $x_k\triangleq \left[x_{k,1},x_{k,2}\right]^T$ is the two-dimensional state vector of the dynamic process.}
     	\label{retransmit_system_model}
     	\vspace{-0.5cm}
     \end{figure}

\subsection{State Estimation at the Smart Sensor}
Since the sensor's measurements are noisy, the smart sensor with sufficient computation and storage capacity is required to estimate the state of the process, $x_k$, using a Kalman filter~\cite{shi2012optimal,yang2013schedule}, which gives the minimum estimation MSE, based on the current and previous raw measurements:
%
%
%
%
\begin{subequations}\label{sub:1}
\begin{align}
x_{k|k-1}^s&=Ax_{k-1|k-1}^s\\
P_{k|k-1}^s&=AP_{k-1|k-1}^sA^T+Q\\
K_k&=P_{k|k-1}^sC^T(CP_{k|k-1}^sC^T+R)^{-1}\\
x_{k|k}^s&=x_{k|k-1}^s+K_k(y_{k}-Cx_{k|k-1}^s)\\
P_{k|k}^s&=(I-K_{k}C)P_{k|k-1}^s
\end{align}
\end{subequations}
where $I$ is the $m \times m$ identity matrix, $(\cdot)^T$ is the transpose operator, $x^s_{k|k-1}$ is the priori state estimation, $x^s_{k|k}$ is the posteriori state estimation at time $k$, $K_k$ is the Kalman gain, $P_{k|k-1}$ and $P_{k|k}$ represent the priori and posterior error covariance at time $k$, respectively. The first two equations present the prediction steps while the last three equations present the updating steps~\cite{maybeck1979stochastic}.
Note that $x^s_{k|k}$ is the output of the Kalman filter at time $k$, i.e., the pre-filtered measurement of~$y_k$, with the estimation error covariance $P_{k|k}^s$.

As we focus on the effect of communication protocols on the stability and quality of the remote estimation, we assume that the local estimation is stable as follows~\cite{shi2012optimal,yang2013schedule}.
\begin{assumption}
	\normalfont
	The local Kalman filter of system \eqref{sys} is stable with the system parameters $\{A, C,Q\}$\footnote{The rigorous stability condition in terms of $\{A, C,Q\}$ is given in~\cite{maybeck1979stochastic}.}, i.e.,
	the error covariance matrix $P_{k|k}^s$ converges to a finite matrix~$\bar{P}_0$ when $k$ is sufficiently large.
\end{assumption}

\emph{In the rest of the paper, we assume that the local Kalman filter operates in the steady state~\cite{shi2012optimal,yang2013schedule}, i.e., $P_{k|k}^s = \bar{P}_0$. For ease of notation, we use $\hat{x}_k^s$ to denote the sensor's estimation, $x_{k|k}^s$.}
       
\subsection{Communication and Remote Estimation}   \label{sec:ARQ} 
     The sensor transmits its pre-filtered measurement in a packet and sends it to the receiver (i.e., the remote estimator) through a static channel, which is modeled as an i.i.d. packet-dropping process.\footnote{General fading and Markov channels can be considered in our future work.}
     Let $(1-\lambda)$ denotes the packet-drop probability.
     Note that the \emph{successful packet detection probability} at the receiver can be different for different transmission/retransmission schemes.
     
     \emph{We assume that the packet length is equal to the sampling period $T_s$. Thus, there exists a unit transmission delay between the sensor and the receiver.} 
     For example, the sensor's raw measurement at the beginning of time slot~$k$ is filtered and sent to the receiver before time slot $(k+1)$.     
     Also, we assume that the acknowledgement/negative-acknowledgement
     (ACK/NACK) message is fed back from the receiver to the sensor perfectly without any delay, when the packet detection succeeds/fails.
     If an ACK is received by the sensor, it will send a new (pre-filtered) measurement in the next time slot. If a NACK is received, the sensor may decide whether to retransmit the unsuccessfully transmitted measurement based on its ARQ protocol or to send the new measurement.
    In the rest of this section, we introduce the standard ARQ-based estimation system. The proposed HARQ-based protocol will be presented in Sec.~\ref{sec:HARQ}.

\emph{\underline{Standard ARQ-Based Remote Estimation.}}	
For the standard ARQ protocol, the receiver discards the failed packets, and the sensor simply resends the previously failed packet if a retransmission is required.
Thus, the successful packet detection probability~at each time is independent of the current number of retransmissions.
Let the random variable $\delta^{\text{ARQ}}_k  \in \{0,1\}$ denote the failed/successful packet detection  at the receiver  in time slot $k$. We have
\begin{equation}
\mathbb{P}\left[\delta^{\text{ARQ}}_k=1\right]=\lambda, \forall k.
\end{equation}
As the chances of the successful detection of a new transmission and a retransmission are the same, 
\emph{the optimal policy is to always transmit the current sensor estimation, i.e., a non-retransmission policy~\cite{gupta2010estimation}.}

Consider the non-retransmission policy.
As the successfully detected packet contains the estimated state information with a \emph{one-step delay}, the receiver needs to estimate the current state based on the dynamic process model~\eqref{sys}. If the packet detection is failed, the receiver can estimate the current state based on its previous estimation and the process model.
Therefore, the optimal estimator at the receiver is given~as~\cite{schenato2008optimal} 
\begin{equation}
\hat{x}_k = 
\begin{cases}
A\hat{x}_{k-1}^s, &\mbox{if } \delta^{\text{ARQ}}_{k-1} = 1 \\
A\hat{x}_{k-1}, &\mbox{otherwise.}
\end{cases}
\end{equation}

\section{HARQ-Based Remote Estimation}  \label{sec:HARQ}
For a HARQ protocol, the receiver buffers the incorrectly received packets,
and the detection of the retransmitted packet depends on all the buffered related packets.\footnote{To be specific, if a retransmission is required, the sensor can either resend the previously failed packet (i.e., a chase combining scheme) or send a retransmission packet that contains different information than the previous one (i.e., a incremental redundancy scheme).
The receiver is possible to successfully detect the current retransmission packet based on the previously erroneously received ones~\cite{frenger2001performance,tripathi2003reliability}.} Thus, the probability of successful packet detection in time slot $k$, depends on the number of consecutive retransmissions $r_k \geq 0$~\cite{tripathi2003reliability}. In particular, $r_k = 0$ indicates a new transmission in time slot $k$.

Let the random variable $\delta^{\text{HARQ}}_k  \in \{0,1\}$ denote the failed/successful packet detection at the receiver in time slot~$k$.
Thus, the successful packet detection probability is given as~\cite{tripathi2003reliability}
\begin{equation} \label{HARQ_prob}
\mathbb{P}\left[\delta^{\text{HARQ}}_k=1\right]= 1 - g(r_k), \forall k,
\end{equation}
where the function $g(\cdot)$ is determined by the specific HARQ protocol (e.g., with chase combining or incremental redundancy). \emph{Specifically, $1- g(0) = \lambda$ and $g(0)>g(r)$ when $r>0$, i.e., a retransmission is more reliable than a new transmission.}

In this scenario, when a failed transmission occurs, there exists an inherent trade-off between retransmitting previously failed local state estimation with a higher success probability, and sending the current state estimation with a lower success probability.
Therefore, the sensor needs to properly decide when to transmit a new estimation and when to retransmit.

Let $a_k \in \{0,1\}$ be the sensor's decision variable at time~$k$, as illustrated in Fig.~\ref{sys}. If $a_k=0$, the sensor sends the new measurement to the receiver in time slot $k$; otherwise, it retransmits the  unsuccessfully transmitted measurement.
Thus, the current number of retransmissions, $r_k$, has the update rule~as
\begin{equation} \label{retransmission time}
r_k =\begin{cases}
0, &\mbox{if } a_{k} = 0 \\
r_{k-1}+1, &\mbox{otherwise.} 
\end{cases}
\end{equation}

If a packet transmitted in time slot $(k-1)$ is successfully detected, the receiver can estimate the current state $x_k$ based on the received sensor's estimation at $(k-1-r_{k-1})$ as illustrated in Fig.~\ref{fig:packet_process},
and the system dynamics~\eqref{sys}. Otherwise, the receiver can only do estimation based on the previous one. Thus, the receiver estimator based on HARQ is given as
\begin{equation} \label{HARQ_estimator}
\hat{x}_k =
\begin{cases}
A\hat{x}_{k-1}^s, &\mbox{if } a_{k-1}=0 \mbox{ and } \delta^{\text{HARQ}}_{k-1}=1\\ 
A^{r_{k-1}+1}\hat{x}_{k-r_{k-1}-1}^s, &\mbox{if } a_{k-1}=1 \mbox{ and }  \delta^{\text{HARQ}}_{k-1}=1\\
A\hat{x}_{k-1}, &\mbox{otherwise.}
\end{cases}
\end{equation}
From the second expression of \eqref{HARQ_estimator}, the estimation quality of $x_k$ is not good if $r_{k-1}$ is large, since the receiver's current estimation $\hat{x}_k$ is based on the sensor's measurement at time $(k-r_{k-1}-1)$, i.e., an out-of-date information.
From the last expression of \eqref{HARQ_estimator}, the estimation quality of $x_k$ is bad if there is a sequence of failed transmissions and the receiver estimates the current state based on the one sent by the sensor a long time ago.

          \begin{figure}[t]
	\centering\includegraphics[scale=0.8]{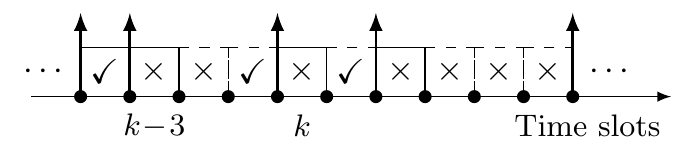}
	\vspace{-0.5cm}
	\caption{An illustration of the sensor's transmission process. The solid circles denote the raw measurement sampling time, 
		the up arrows are the starting points of new transmissions (i.e., only these (pre-filtered) measurements will be sent to the receiver), solid/dashed blocks are new/re-transmission packets, and $\checkmark$/$\times$ denotes a successful/failed detection at the receiver.}
	\label{fig:packet_process}
	\vspace{-0.5cm}
\end{figure} 

For ease of analysis, we define the estimation quality index, $q_k$, as
\begin{equation}\label{def_q_0}
q_k \triangleq k-t_k,
\end{equation}
where $t_k$ is the generation time slot of the latest sensor's estimation that is successfully received by the receiver before time slot $(k+1)$,\footnote{Note that the definition of $q_k$ is similar to that of AoI~\cite{kaul2012real}, which will be further discussed in Sec.~\ref{sec:delay_optimal}.} and $q_k\geq 0$.
As it is straightforward that
$
t_k = 
\begin{cases}
k- r_k, &\mbox{ if } \delta^{\text{HARQ}}_{k}=1\\
t_{k-1}, &\mbox{ otherwise}
\end{cases}
$,
we have
\begin{equation} \label{def_q}
q_k = 
\begin{cases}
r_k, &\mbox{ if } \delta^{\text{HARQ}}_{k}=1\\
q_{k-1}+1, &\mbox{ otherwise.}
\end{cases}
\end{equation}
Therefore, the last iteration expression in \eqref{HARQ_estimator} can be further written as
\begin{equation}\label{HARQ_estimator_2}
\hat{x}_k = A^{q_{k-1}+1} \hat{x}^s_{k-q_{k-1} -1}, \mbox{ if } \delta^{\text{HARQ}}_{k-1}=0.
\end{equation}
In other words, the receiver estimation at time $k$ is based on the state estimation of the smart sensor at time $(k-q_{k-1}-1)$.

Therefore, from \eqref{HARQ_estimator}, \eqref{def_q} and \eqref{HARQ_estimator_2}, the estimation error covariance can be obtained~as
\begin{align} \label{covariance1}
P_k 
&\triangleq  \mathbb{E}\left[(x_k-\hat{x}_k)(x_k-\hat{x}_k)^T\right]\\
\label{covariance2}
&=
\begin{cases}
f(\bar{P}_0), &\mbox{if } a_{k-1}=0 \mbox{ and } \delta^{\text{HARQ}}_{k-1}=1\\
f^{r_{k-1}+1}(\bar{P}_0), &\mbox{if } a_{k-1}=1 \mbox{ and }  \delta^{\text{HARQ}}_{k-1}=1\\
f^{q_{k-1}+1}(\bar{P}_0), &\mbox{otherwise}
\end{cases}\\ 
&=f^{q_{k-1}+1}(\bar{P}_0) \label{general_form}
\end{align}
where \eqref{general_form} is obtained by taking \eqref{retransmission time} and \eqref{def_q} into \eqref{covariance2},
 $f(X)\triangleq AXA^T+Q$, $f^{n+1}(\cdot)  \triangleq f (f^{n}(\cdot))$ when $n\geq 1$, and $f^{1}(\cdot) \triangleq f(\cdot)$.	
Note that $P_k$ takes value from a countable infinity set, i.e., $P_k \in \{f(\bar{P}_0),f^2(\bar{P}_0),\cdots\}$.	
The operator $f^n(\bar{P}_0)$ is monotonic with respect to (w.r.t.) $n$, i.e.,   the matrix $f^{n_1}(\bar{P}_0) \leq f^{n_2}(\bar{P}_0)$ in element wise if $1 \leq n_1 \leq n_2$,  and hence $\text{Tr}\left(f^{n_1}(\bar{P}_0)\right) \leq \text{Tr}\left(f^{n_2}(\bar{P}_0)\right)$, where $\text{Tr}\left(\cdot \right)$ is the trace operator (see Lemma 3.1 in \cite{shi2012scheduling}).

\emph{\underline{Performance Metric and Problem Formulation.}}
Based on the estimation error covariance $P_k$ in \eqref{covariance1}, the estimation MSE of $x_k$ is $\text{Tr}\left(P_k\right)$. 
Thus, the long-term average MSE of the dynamic process is defined as
\begin{equation}\label{cost_function}
\limsup_{K\to\infty}\frac{1}{K}\sum_{k=1}^{K} \mathbb{E}\left[\text{Tr}\left(P_k\right)\right],
\end{equation}
   where $\limsup_{K\rightarrow \infty}$ is the limit superior operator. 

The sensor's decision policy of transmission and retransmission is defined as
$
\pi \triangleq (a_1,a_2,...,a_k,\cdots).
$

In what follows, we optimize the sensor's transmission policy such that the long-term estimation error is minimized,~i.e.,
\begin{equation} \label{problem}
\min_{\pi}\limsup_{K\to\infty}\frac{1}{K}\sum_{k=1}^{K} \mathbb{E}\left[\text{Tr}\left(P_k\right)\right].
\end{equation}

%
		

\section{Performance-Optimal Policy}  \label{sec:performance_optimal}
   
   \subsection{MDP Formulation}     \label{sec:MDP}  
From \eqref{retransmission time}, \eqref{def_q} and \eqref{general_form}, the estimation MSE $\text{Tr}\left(P_k\right)$ and also the states $r_k$ and $q_k$ only depend on the current action $a_{k}$ and the previous states $r_{k-1}$ and $q_{k-1}$.
Thus, problem \eqref{problem} can be formulated as a discrete time Markov decision process (MDP) as follows. 

1) The state space is defined as 
$\mathbb{S} \triangleq \{(r, q) : r \leq q,\ (r, q) \in \mathbb{N}_0 \times \mathbb{N}_0\}$, where $\mathbb{N}_0$ is the set of non-negative integers, and the current retransmission time $r$ should be no larger than $q$ from the definition~\eqref{def_q_0}.
The state of the MDP at time $k$ is 
$s_k \triangleq (r_k, q_k) \in \mathbb{S}$. 
%

2) The action space is defined as $\mathbb{A} \triangleq \{0,1\}$. Recall that the action at time $k$, $a_k \in \mathbb{A}$, indicates a new transmission $(a_k=0)$ or a retransmission $(a_k=1)$.

3) The state transition function $P(s'|s,a)$ characterizes the probability that the state transits from state $s$ at time $(k-1)$ to $s'$ at time $k$ with action $a$ at time $k$. As the transition is time-homogeneous and the successful packet detection rate only depends on the number of retransmissions $r$, we can drop the time index $k$ here. Let $s=(r,q)$ and $s'=(r',q')$ denote the current and next state, respectively. 
Based on the HARQ successful packet detection probability~\eqref{HARQ_prob} and the iterations~\eqref{retransmission time} and \eqref{def_q}, we have the following state transition.
If the action $a=0$, the next state is 
\begin{equation}\label{s'_1}
s'=
\begin{cases}
(0,0), \mbox{ with probability } (1-g(0)) \\
(0,q+1), \mbox{ with probability } g(0).
\end{cases}
\end{equation}
If the action $a=1$, the next state is 
\begin{equation}\label{s'_2}
\hspace{-0.2cm}
s'=\begin{cases}
(r+1,r+1), \mbox{ with probability } (1-g(r+1))\\
(r+1, q+1), \mbox{ with probability } g(r+1).
\end{cases}
\end{equation}

4) The one-stage (instantaneous) cost based on \eqref{general_form} and \eqref{cost_function} is a function of the current state, which is independent of action:
\begin{equation} \label{one-stage cost}
c((r,q),a)\triangleq \text{Tr}\left(f^{q+1}(\bar{P}_0)\right).
\end{equation}

Since the cost function grows exponentially with the state~$q$, it is possible that the long-term average cost with a HARQ-based policy in the state space $\mathbb{S}$ cannot be bounded, i.e., the remote estimation system is unstable. We give the following sufficient condition of the existence of an optimal policy that has a bounded long-term MSE.



 \begin{theorem} \label{theorem:existence}
 	\normalfont
    	There exists a stationary and deterministic optimal policy $\pi^*$ of problem \eqref{problem} in the state space $\mathbb{S}$, if the following condition holds:
    	\begin{equation} \label{stability_condition}
    	(1-\lambda') \rho^2(A) <1, \text{ where } (1-\lambda')\triangleq \max\limits_{r>0}\{g(r)\}. 
    	\end{equation}
 \end{theorem}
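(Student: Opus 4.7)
The plan is to exhibit one stationary deterministic policy with finite long-term average cost under the stated condition, and then invoke the standard Sennott framework for countable-state average-cost MDPs with unbounded one-stage cost to upgrade this to the existence of an \emph{optimal} stationary deterministic policy.

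For the benchmark I would use the ``retransmit-until-success'' policy $\tilde{\pi}$ defined by $\tilde{\pi}(r,q) = 0$ when $q = r$ (the previous slot delivered a packet, so send a fresh measurement) and $\tilde{\pi}(r,q) = 1$ otherwise (keep retransmitting the stale one). Under $\tilde{\pi}$ the chain regenerates at every successful detection: starting immediately after a success, the number of slots $\tau$ until the next success satisfies $\mathbb{P}[\tau = 1] = \lambda$ and $\mathbb{P}[\tau \geq j] = (1-\lambda)\, g(1) g(2)\cdots g(j-2) \leq (1-\lambda)(1-\lambda')^{j-2}$ for $j \geq 2$, by the assumption $g(r) \leq 1-\lambda'$ for $r > 0$. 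Between two successes $q$ grows by one per slot, and at the new success $q$ drops to the current retransmission index. Using the standard fact that $\text{Tr}(f^n(\bar{P}_0))$ grows as $\Theta(\rho^2(A)^n)$ (an immediate consequence of the Jordan form of $A$ in the recursion $f(X)=AXA^T+Q$), the expected cost accumulated in one cycle is controlled by $\mathbb{E}\!\left[\rho^2(A)^{\tau}\right]$, which is finite precisely when $(1-\lambda')\rho^2(A) < 1$. Combined with $\mathbb{E}[\tau] < \infty$, the renewal-reward theorem yields a finite long-term average cost for $\tilde{\pi}$.

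Armed with $\tilde{\pi}$ and the reference state $z = (0,0)$, I would next verify Sennott's conditions on the $\alpha$-discounted value functions $V_\alpha$: (i) $\sup_{\alpha<1}(1-\alpha) V_\alpha(z) < \infty$, via $(1-\alpha)V_\alpha(z) \leq (1-\alpha)V^{\tilde{\pi}}_\alpha(z) \to $ average cost of $\tilde{\pi}$, which is finite; (ii) a uniform lower bound $h_\alpha(s) := V_\alpha(s) - V_\alpha(z) \geq -L$, which follows from $c \geq 0$ together with the finite expected $\tilde{\pi}$-cost from $z$ back to $z$; and (iii) $\sup_{\alpha<1} h_\alpha(s) < \infty$ for every $s \in \mathbb{S}$. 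Condition (iii) is obtained by upper-bounding $h_\alpha(s)$ by the expected $\tilde{\pi}$-hitting cost from $s$ to $z$: from any $(r,q)$, $\tilde{\pi}$ reaches a post-success state within a geometric-tailed number of slots (retransmission success probability at least $\lambda'$), and from any post-success state a geometric number of new-transmission attempts reach $(0,0)$; the same geometric sum that bounded $\mathbb{E}[\rho^2(A)^\tau]$ makes this expected hitting cost finite for every $s$. Sennott's theorem then delivers a stationary deterministic $\pi^*$ attaining the infimum in~\eqref{problem}.

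The main obstacle is verifying (iii): because the one-stage cost $c((r,q),\cdot) = \text{Tr}(f^{q+1}(\bar{P}_0))$ grows exponentially in $q$, any naive hitting-cost bound diverges unless the retransmission success probability is large enough to compensate for the exponential growth of $f^n(\bar{P}_0)$. The hypothesis $(1-\lambda')\rho^2(A) < 1$ is exactly the quantitative stability margin that guarantees convergence of the geometric sums appearing both in the average-cost bound for $\tilde{\pi}$ and in the hitting-cost bound from arbitrary initial states, so the same inequality carries the entire argument.
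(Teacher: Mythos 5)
Your proposal is correct and follows essentially the same route as the paper: the same retransmit-until-success benchmark policy ($a=0$ iff $r=q$), the same geometric bound $g(r)\le 1-\lambda'$ for $r>0$ combined with the criterion from Schenato that $\sum_{j}(1-\lambda')^{j}\,\text{Tr}\left(f^{j+1}(\bar{P}_0)\right)<\infty$ iff $(1-\lambda')\rho^2(A)<1$, and the same appeal to Sennott's countable-state average-cost existence theory. The only packaging difference is that the paper verifies the (CAV*1)/(CAV*2) conditions directly by bounding the expected first-passage cost and time to $(0,0)$ under this policy, whereas you route through the vanishing-discount conditions on $V_\alpha$; since your conditions (i)--(iii) are established by exactly those same first-passage-cost estimates, the substance coincides.
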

\begin{proof}
	See Appendix A.
\end{proof}


\begin{remark}
	From Theorem~\ref{theorem:existence}, it is clear that the optimal policy exists if the channel condition is good (i.e., a smaller $g(r)$ and a smaller $1-\lambda'$) and the dynamic process does not change quickly (i.e., a small $\rho^2(A)$).
	Assuming the existence of a stationary and deterministic optimal policy, we can effectively solve the MDP problem using standard methods such as the relative value iteration algorithm~\cite[Chapter~8]{puterman2014markov}.  
\end{remark}
     
     \subsection{Structural Property of the Optimal Policy}
     The switching structure of the optimal policy is given as follows.
	\begin{theorem} \label{theorem:switching}
 	\normalfont		
		The optimal policy $\pi^*$ of problem \eqref{problem} is a switching-type policy, i.e.,		
(i) if $\pi^{*}(r,q)=0$, then  $\pi^{*}(r+z,q)=0$;
(ii) if $\pi^{*}(r,q)=1$, then  $\pi^{*}(r,q+z)=1$, where $z$ is any positive integer.
	\end{theorem}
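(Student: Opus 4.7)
The plan is to derive the switching structure from the optimal action-value functions. By Theorem~\ref{theorem:existence}, a stationary deterministic optimal policy $\pi^*$ and a bounded relative value function $V^*$ exist, and the Bellman optimality condition yields $\pi^*(r,q)=0$ iff $Q^*(r,q,0)\le Q^*(r,q,1)$, where
\begin{align*}
Q^*(r,q,0) &= \lambda V^*(0,0) + (1-\lambda) V^*(0,q+1),\\
Q^*(r,q,1) &= (1-g(r+1)) V^*(r+1,r+1) \\
&\quad + g(r+1) V^*(r+1,q+1).
\end{align*}
The crucial observation is that $Q^*(r,q,0)$ does not depend on $r$. Hence a sufficient condition for claim (i) is that $r\mapsto Q^*(r,q,1)$ is non-decreasing for fixed $q$, and a sufficient condition for claim (ii) is that $q\mapsto Q^*(r,q,0)-Q^*(r,q,1)$ is non-decreasing for fixed $r$.

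To obtain these monotonicities I would pass to the discounted-cost formulation and prove by induction on the value iteration that every iterate $V_n$ satisfies structural properties preserved by the Bellman operator and transferable to $V^*$ via standard vanishing-discount arguments. Claim (ii) follows from the cross-difference inequality
\begin{align*}
&(1-\lambda)\bigl[V_n(0,q+2)-V_n(0,q+1)\bigr] \\
&\quad \ge g(r+1)\bigl[V_n(r+1,q+2)-V_n(r+1,q+1)\bigr],
\end{align*}
which couples the marginal cost of an extra unit of $q$ along the two action branches and is consistent with $g(0)=1-\lambda>g(r+1)$. Claim (i) follows from the monotonicity of $r\mapsto Q^*(r,q,1)$, which I would derive by rewriting $Q^*(r,q,1) = V^*(r+1,r+1) + g(r+1)\bigl[V^*(r+1,q+1)-V^*(r+1,r+1)\bigr]$ and tracking how successive retransmissions shift both the probability weight $g(r+1)$ and the nonnegative \emph{quality gap} $V^*(r+1,q+1)-V^*(r+1,r+1)$, which is nonnegative on the feasible set by the $q$-monotonicity of $V^*$.

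The base case $V_0\equiv 0$ is trivial, and the inductive step uses the explicit transition kernel together with the strictly increasing one-stage cost $c(q)=\text{Tr}(f^{q+1}(\bar{P}_0))$ and the known monotonicity $f^{n_1}(\bar{P}_0)\le f^{n_2}(\bar{P}_0)$ for $n_1\le n_2$. The main obstacle will be propagating the cross-difference inequality through the action-minimization step of the Bellman operator, since the pointwise minimum of two functions with supermodularity-like properties need not retain them. The remedy I would try is to track all required monotonicities jointly during the induction and to exploit the structural fact that the $a=0$ branch has a successor distribution independent of $r$, so that one side of the inequality depends only on $V_n(0,\cdot)$ and can be handled separately from the $V_n(r+1,\cdot)$ side, making the case analysis in the inductive step tractable.
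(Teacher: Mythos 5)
Your route is genuinely different from the paper's: you work directly with the action-value functions and try to establish monotonicity of $Q^*(r,q,1)$ in $r$ and of $Q^*(r,q,0)-Q^*(r,q,1)$ in $q$ by induction on value iteration plus a vanishing-discount argument, whereas the paper never touches the value function at all --- it invokes Puterman's Theorem~8.11.3, which reduces the switching structure to four conditions on the \emph{primitives}: monotonicity and superadditivity of the one-stage cost \eqref{one-stage cost} (trivial here because the cost is action-independent) and of the tail transition probabilities built from \eqref{s'_1}--\eqref{s'_2}. Your reductions are correct as far as they go: $Q^*(r,q,0)$ indeed does not depend on $r$, and the two monotonicity conditions you isolate would imply claims (i) and (ii) respectively.

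However, there is a genuine gap: the proposal is a roadmap, and the steps that carry all the difficulty are not executed. First, the monotonicity of $r\mapsto Q^*(r,q,1)$ is not self-evident and no argument is given for it: writing $Q^*(r,q,1)=(1-g(r+1))V^*(r+1,r+1)+g(r+1)V^*(r+1,q+1)$, increasing $r$ shifts probability mass from the larger term $V^*(r+1,q+1)$ to the smaller term $V^*(r+1,r+1)$ (which pushes $Q^*$ down) while both terms themselves tend to increase in $r$ (which pushes it up); resolving this competition requires quantitative properties of $V^*$ that you have not stated, let alone proved. Second, for claim (ii) the inequality $g(0)>g(r+1)$ alone does not give your cross-difference inequality; you additionally need the increments $V^*(0,q+2)-V^*(0,q+1)$ and $V^*(r+1,q+2)-V^*(r+1,q+1)$ to compare favorably, i.e.\ a submodularity-type property of $V^*$, and you yourself identify that propagating such a property through the $\min$ in the Bellman operator is ``the main obstacle'' --- but you only describe a remedy you ``would try.'' Until that inductive step is actually carried out (base case, closure of the full list of joint hypotheses under the Bellman operator, and the passage from discounted to average cost), the proof is not complete. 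The paper's choice to verify superadditivity of the cost and of the cumulative transition kernel instead is precisely what lets it avoid this value-function induction.
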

	\begin{proof}
		See Appendix B.
	\end{proof}
In other words, for the optimal policy, the two-dimensional state space $\mathbb{S}$ is divided into two regions by a curve, and the decision actions of the states within each region are the same, which will be illustrated in Sec.~\ref{sec:num}.

\begin{remark}
	Note that the switching structure can help saving storage space for on-line implementation, since the smart sensor only needs to store switching-boundary states rather than the actions on the entire state space. At each time, the sensor simply needs to compare the current state with the boundary states to give the optimal decision.
\end{remark}

     \subsection{Suboptimal Policy}
     The optimal policy of the MDP problem does not have a closed-form expression for low-complexity computation.
	 Besides, since the MDP problem has infinitely many states,
     it has to be approximated by a truncated MDP problem with finite states for numerical evaluation and solved offline.
     Therefore, we propose a easy-to-compute suboptimal policy, which is the myopic policy that makes decision simply to maximize the expected next step cost.
     
     Based on \eqref{s'_1}, \eqref{s'_2} and \eqref{one-stage cost}, the expected next step cost $c'((r,q),a)$ given the current state $(r,q)$ can be derived~as
     \begin{equation}
     \begin{aligned}
     &c'((r,q),a) \\
     &=\!\begin{cases}
     \!g(0)\text{Tr}\left(f^{q+2}(\bar{P}_0)\right) \!+\! (1-g(0))\text{Tr}\left(f(\bar{P}_0)\right), \mbox{ if } a = 0;\\
     \!g(r+1)\text{Tr}\left(\!f^{q+2}(\bar{P}_0)\right) \!+\! (1-g(r+1))\text{Tr}\left(\!f^{r+2}(\bar{P}_0)\right) \\
     \hspace{6.3cm}\mbox{ if } a = 1.
     \end{cases}
     \end{aligned}
     \end{equation}
     Then, we have
     \begin{equation}
\begin{aligned}
     &c'((r,q),1)-c'((r,q),0) \\
     &= (g(r+1)-g(0))\text{Tr}\left(f^{q+2}(\bar{P}_0)\right)\\
     &+(1-g(r+1))\text{Tr}\left(f^{r+2}(\bar{P}_0)\right)-(1-g(0))\text{Tr}\left(f(\bar{P}_0)\right).
\end{aligned}
     \end{equation}
      Since $g(0)>g(r)$ when $r>0$, $c'((r,q),1)-c'((r,q),0) \geq 0$ if and only if $(r,q)$ satisfies
      \begin{equation} \label{condition}
\begin{aligned}
      &\text{Tr}\left(f^{q+2}(\bar{P}_0)\right) \\
      &\leq \frac{(1-g(r+1))\text{Tr}\left(f^{r+2}(\bar{P}_0)\right) -(1-g(0))\text{Tr}\left(f(\bar{P}_0)\right)}{g(0)-g(r+1)}.
\end{aligned}
      \end{equation}
Thus, we have the following result.      
      \begin{proposition} \label{prop:sub}
 	\normalfont      	
      	A suboptimal policy of problem~\eqref{problem} is
      	\begin{equation} \label{sub_policy}
      a = \begin{cases}
      	0 &\mbox{if the condition \eqref{condition} is satisfied,}\\
      	1 &\mbox{otherwise.}\\
      	\end{cases}
      	\end{equation}
      \end{proposition}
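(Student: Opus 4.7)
The proposition defines the suboptimal policy as the myopic (one-step-lookahead) rule: at each state $(r,q)$, pick the action that minimizes the expected cost at the next decision epoch. The plan is simply to verify that the decision rule in \eqref{sub_policy} based on the threshold \eqref{condition} coincides with this myopic choice, so the proof reduces to a clean algebraic derivation.

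First, I would compute $c'((r,q),a) \triangleq \mathbb{E}[c(s_{k+1},a_{k+1}) \mid s_k = (r,q),\, a_k = a]$. Since the one-stage cost in \eqref{one-stage cost} depends only on the $q$-component of the next state, I can read the expectation directly from the two candidate successor states in \eqref{s'_1} and \eqref{s'_2}. For $a=0$, the successor is $(0,0)$ with probability $1-g(0)$ and $(0,q+1)$ with probability $g(0)$, and substituting into \eqref{one-stage cost} yields the first branch displayed above the proposition. For $a=1$, the successor is $(r+1,r+1)$ with probability $1-g(r+1)$ and $(r+1,q+1)$ with probability $g(r+1)$, yielding the second branch.

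Next, I would form the difference $\Delta(r,q) \triangleq c'((r,q),1) - c'((r,q),0)$, which recovers the displayed expansion in the excerpt. The myopic rule chooses $a=0$ iff $\Delta(r,q) \geq 0$, so the goal is to rearrange this single scalar inequality. The key step is to isolate $\text{Tr}(f^{q+2}(\bar{P}_0))$: by the assumption $g(0) > g(r)$ for all $r > 0$ (applied with the argument $r+1$), the coefficient $g(0) - g(r+1)$ is strictly positive, so dividing through by it preserves the direction of the inequality and delivers exactly \eqref{condition}. Equivalently, $a=0$ is myopic-optimal iff \eqref{condition} holds, and $a=1$ otherwise, which is precisely \eqref{sub_policy}.

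The only subtle point to handle carefully is the sign of $g(0) - g(r+1)$; once this is invoked to justify the inequality-direction when dividing, the remainder is routine rearrangement. No claim of near-optimality with respect to problem \eqref{problem} is being asserted by the proposition, so there is no additional obstacle: the statement is, by construction, the closed-form of the greedy one-step-lookahead policy.
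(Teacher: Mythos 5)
Your proposal is correct and follows essentially the same route as the paper: the paper's own justification is precisely the in-text derivation of the expected next-step cost $c'((r,q),a)$ from the transitions \eqref{s'_1}--\eqref{s'_2} and the cost \eqref{one-stage cost}, followed by rearranging the sign condition on $c'((r,q),1)-c'((r,q),0)$ using $g(0)>g(r+1)$ to obtain \eqref{condition}. Your attention to the sign of $g(0)-g(r+1)$ when isolating $\text{Tr}\left(f^{q+2}(\bar{P}_0)\right)$ is exactly the one nontrivial step, and your reading of the proposition as asserting only the closed form of the myopic rule (with no optimality-gap claim) matches the paper's intent.
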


     It can be proved that the suboptimal policy in Proposition~\ref{prop:sub} is also a switching-type policy. Moreover, based on \eqref{sub_policy} and the monotonicity of  $\text{Tr}\left(f^{n}(\bar{P}_0)\right)$ w.r.t. $n$ discussed in Sec.~\ref{sec:HARQ}, it can be verified that the action should always be zero for the states $(r,q) \in \mathbb{S}$ with $r=q$, i.e., a new transmission is required.
\emph{Due to the simplicity of the suboptimal policy, which, unlike the optimal policy, does not need any iteration for policy calculation, it can be applied as an on-line decision algorithm. } In Sec.~\ref{sec:num}, we will show that the performance of the suboptimal policy is close to the optimal one for practical system parameters. The detailed computing complexity analysis of the policies is omitted due to the space limitation.
	      
\section{Delay-Optimal Policy: A Benchmark} \label{sec:delay_optimal}
    We also consider a delay-optimal policy based on the HARQ protocol, which is similar to \cite{ceran2018average}, as the benchmark of the proposed performance-optimal policy.
    
    We use the AoI to measure the delay of the system. Specifically, $\tau_k$ is the AoI of the system at the beginning of time slot $k$.
    Due to the definition of $q_k$ in \eqref{def_q_0}, it is clear that 
$
    \tau_k = k-t_{k-1} = q_{k-1}+1.
$
   Therefore, similar to the performance optimization problem~\eqref{problem}, the delay optimization problem is formulated as
  $
    \min_{\pi}\limsup_{K\to\infty}\frac{1}{K}\sum_{k=1}^{K} \mathbb{E}\left[\tau_k\right].
    $
	This~problem can also be converted to a MDP problem with the same state space, action space and state transition function~as presented in Sec.~\ref{sec:MDP}. The one-stage cost in terms of delay~is 
	\begin{equation} \label{one-stage cost2}
		c((r,q),a) = q+1.
	\end{equation}
    
    Comparing \eqref{one-stage cost2} with \eqref{one-stage cost}, we see that the cost function of the delay-optimal policy is a linear function of $q$, while it grows exponentially fast with $q$ in the performance-optimal policy. Thus, these two policies should be different and their performance will be compared in the following section.
     
   \section{Numerical Results} \label{sec:num}
In this section, we present numerical results of the optimal policy in Sec.~\ref{sec:performance_optimal} and its performance. Also, we numerically compare the performance-optimal policy with the benchmark policy in Sec.~\ref{sec:delay_optimal}.
Unless otherwise stated, we set
$A = \begin{bmatrix}
1.8 & 0.2 \\
0.2 & 0.8
\end{bmatrix}$, $C = \begin{bmatrix}
1 & 1
\end{bmatrix}$, $Q = I$, $R = 1$, 
and thus
$\rho^2(A) = 1.8385^2$, 
$\bar{P}_0 = \begin{bmatrix}
2.3579  & -1.5419 \\
-1.5419 &  1.5987
\end{bmatrix}$. 
The successful detection probability of a new transmission is $\lambda = 0.8$. 

Due to the exponential behavior of the error probability of HARQ~\cite{frenger2001performance,tripathi2003reliability},
the packet detection error probability of a HARQ protocol is approximated as $g(r) = (1-\lambda) h^r $ for $r\geq 0$. It can be verified that condition~\eqref{stability_condition} holds, i.e., the optimal policy exits.
The parameter $h$ is determined by the HARQ combining scheme (e.g., the incremental redundancy scheme has a smaller $h$, i.e., a better performance, than the chase combining scheme). 

\emph{\underline{Policy Comparison.}}
We use the relative value iteration algorithm based on the Matlab MDP toolbox to solve the MDP problems in Sections~\ref{sec:performance_optimal} and \ref{sec:delay_optimal},
where the unbounded state space $\mathbb{S}$ is truncated as $\{(r, q) : 0 \leq r\leq q\leq 20\}$ to enable the evaluation. 
Fig.~\ref{policy_comparison} shows different policies with different parameter $h$ within the truncated state space. 
In Fig.~\ref{policy_comparison}(a), we see that in line with Theorem~\ref{theorem:switching}, the optimal policy is a switching-type one, where the actions of the states that are close to the states with $r=q$, are equal to zero, i.e., new transmissions are required.
Also, we see that the suboptimal policy plotted in Fig.~\ref{policy_comparison}(b) is a good approximation of the optimal one within the truncated state space. However, the delay-optimal policy plotted in Fig.~\ref{policy_comparison}(c) is very different from the previous ones, where more states have the action of new transmission.
Therefore, retransmissions are more important to reduce the estimation MSE than the delay.
Fig.~\ref{policy_comparison}(d) presents the optimal policy with $h = 0.9$. Comparing with Fig.~\ref{policy_comparison}(a), we see that more states have to choose the action of new transmission with the HARQ protocol having a larger $h$, i.e., a worse HARQ combining scheme.


\begin{figure}[t]
	\centering
	\includegraphics[scale=0.5]{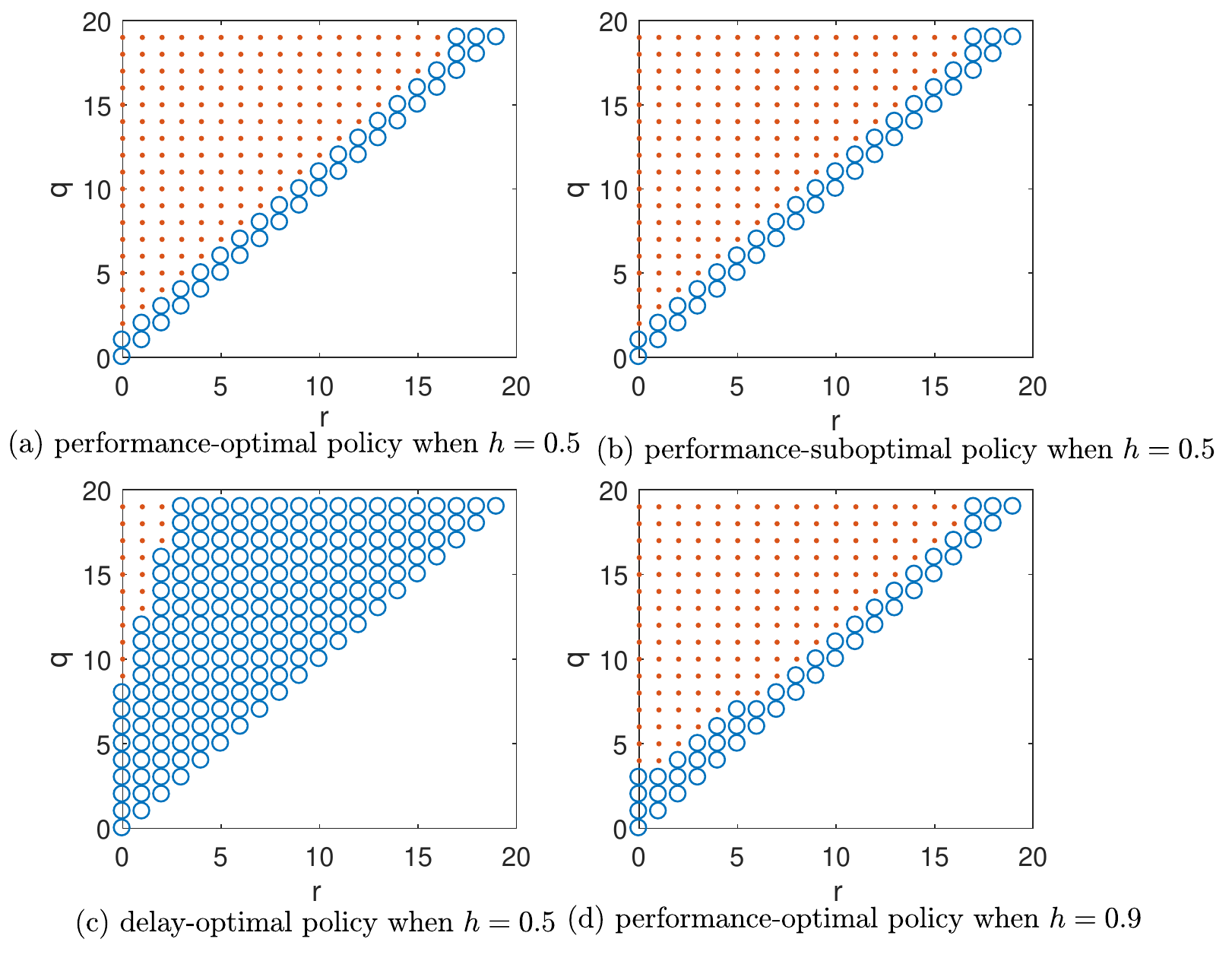}
	\vspace{-0.3cm}
	\caption{An illustration of different policies with different $h$, where `o' and `$\cdot$' denote $a =0$ and $a=1$, respectively.}
	\label{policy_comparison}
     	\vspace{-0.5cm}
\end{figure}

\emph{\underline{Performance Comparison.}}
Based on the above numerically obtained polices and the policy with the standard ARQ, i.e., the one without retransmission (see Sec.~\ref{sec:ARQ}), we further evaluate their performances in terms of the long-term average MSE using \eqref{cost_function}.
We run $2000$ Monte Carlo simulations with the initial value of $P_k$ as $P_0 = f(\bar{P}_0) = \begin{bmatrix}
7.5934  & -1.1774\\
-1.1774  &  1.6241
\end{bmatrix}$.
Also, we set $\text{Tr}(P_0) =~9.2$ as the \emph{performance baseline}, as $\text{Tr}(P_0)\leq \text{Tr}(P_k)$, $\forall k$.

Fig.~\ref{performance_0.1} plots the average MSE versus the simulation time $K$, using different policies with $h=0.5$.
We see that the average MSEs of different policies converge to the steady state values when $K > 1200$.
Given the performance baseline, the performance-optimal policy gives a $32\%$ and $10\%$ MSE reduction of the non-retransmission policy when $\lambda = 0.8$ and $0.85$, respectively.
This shows that the performance improvement by the HARQ-based policy is more significant when we have a worse channel quality.
The performance gap between the performance- and delay-optimal policies in terms of MSE is noticeable for these cases, which demonstrates the superior of the proposed optimal one.

\begin{figure}[t]
	\centering
	\includegraphics[scale=0.5]{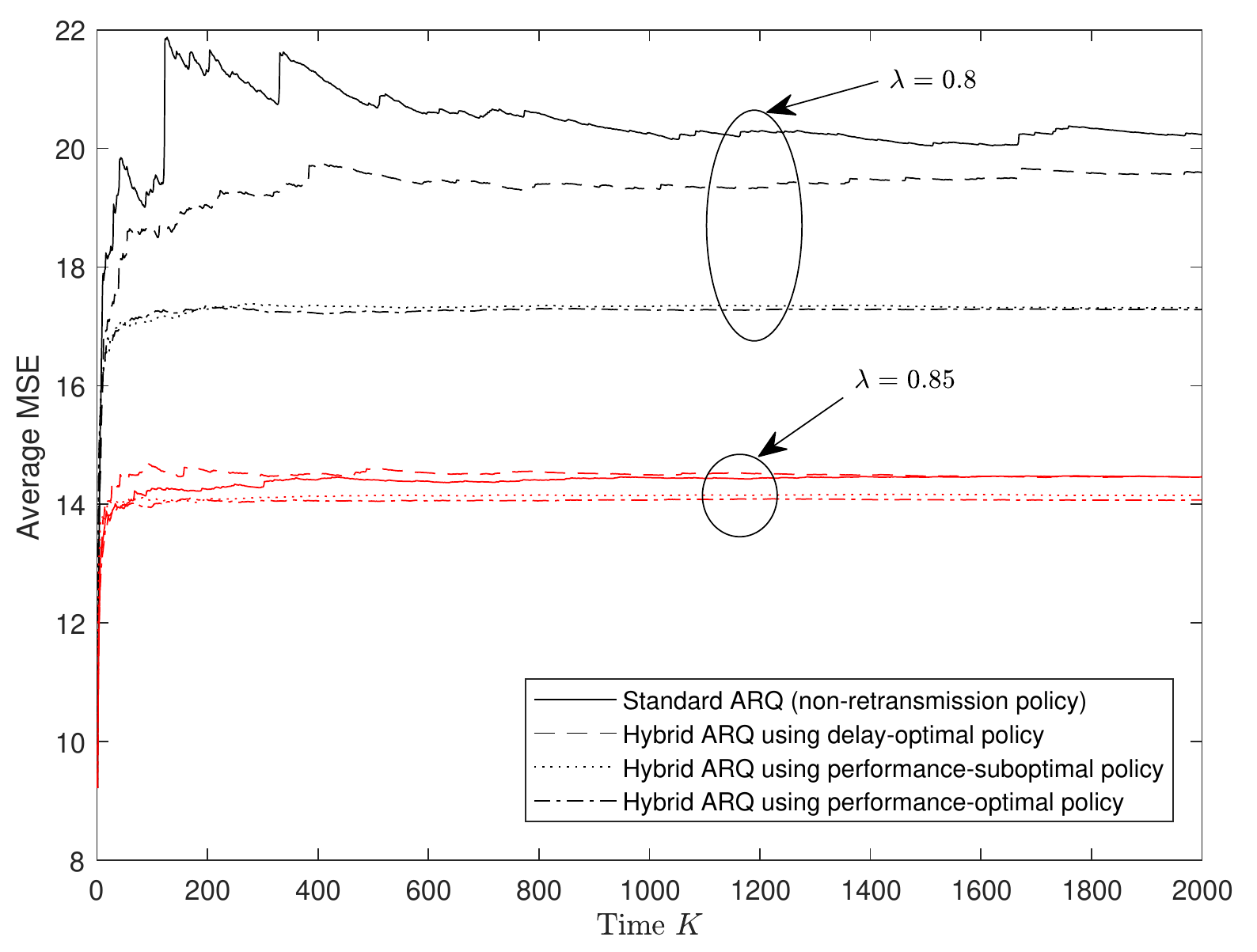}
	\vspace{-0.3cm}
	\caption{Average MSE with different policies, $h = 0.5$}
	\label{performance_0.1}
	     	\vspace{-0.5cm}
\end{figure}

\section{Conclusions}
We have proposed and optimized a HARQ-based remote estimation protocol for real-time applications.
Our results have shown that the optimal policy is able to achieve a remarkable $30\%$ estimation MSE reduction for some practical settings.
As the recent
communication standards for real-time wireless control, such
as WirelessHART, ISA-100 and IEEE 802.15.4e, have not
adopted any HARQ techniques,
this work also suggests that HARQ can be adopted by the future real-time communication standards  to enhance the system performance.

     \section*{Appendix A: Proof of Theorem~\ref{theorem:existence}}
	To prove the existence of a stationary and deterministic optimal policy given condition~\eqref{stability_condition}, we need to verify the following conditions~\cite[Corollary 7.5.10]{sennott2009stochastic}:
	(CAV*1) there exists a standard policy $\psi$ such that the recurrent class $R_{\psi}$ induced by $\psi$ is equal to the whole state space $\mathbb{S}$;
    (CAV*2) given $U > 0$, the set $\mathbb{S}_U = \{s|c(s,a) \leq U  \mbox{ for some } a\}$ is finite.	    
   
    Condition (CAV*2) can be easily verified based on \eqref{one-stage cost}. In what follows, we verify (CAV*1) by first constructing a policy $\psi$ and then proving that it is a standard policy.
    
    The action of the policy $\psi$ is given as 
    \begin{equation} \label{standard}
    a = \psi(s) = \psi(r,q)
    =
    \begin{cases}
    0, \ r=q\\
    1, \ \text{otherwise}.
    \end{cases}
    \end{equation}
    It is easy to prove that any state in $\mathbb{S}$ induced by $\psi$ is a recurrent state.
    We then prove that $\psi$ is a standard policy by verifying both the expected first passage cost and time from state $(r,q) \in \mathbb{S}\backslash (0,0)$ to $(0,0)$ are bounded~\cite{sennott2009stochastic}. Due to the space limitation, we only prove that any state with $r=q$ has bounded first passage cost and time. The other states can be proved similarly. 
    
    For simplicity, the expected first passage cost of the state $(i,i)$ is denoted as $d(i)$, and the one-stage cost \eqref{one-stage cost} is rewritten as
    $
    c(q) \triangleq c((r,q),a) = \text{Tr}\left(f^{q+1}(\bar{P}_0)\right).
    $
    Based on \eqref{HARQ_prob}, \eqref{standard} and the law of total expectation, we have
    \begin{equation} \label{equation_group}
\begin{aligned}
    &d(i) = c(i) + (1-g(0))c(0) + g(0) c(i+1)\\
    &+ g(0) (1-g(1)) d(1) + g(0) g(1) c(i+2)\\
    &+ g(0) g(1) (1-g(2)) d(2) + g(0) g(1) g(2) c(i+3) + \cdots\\
    &= \nu(i) +(1-g(0))c(0)+ D, \forall i >0,
\end{aligned}
    \end{equation}
    where $g(0) = 1-\lambda$,    
\begin{align}
\label{h}
    &\nu(i) = c(i) +\sum_{j=1}^{\infty} \alpha_j c(i+j),\ D= \sum_{j=1}^{\infty} \beta_j d(j),
\end{align}
    and $\alpha_j = \prod_{l=1}^{j} g(l-1)$ and $\beta_j = \prod_{l=1}^{j} g(l-1)(1-g(j))$.
    Therefore, $d(i)$ is bounded if $\nu(i)< \infty$ and $D< \infty$. Since $g(r)\leq (1-\lambda')$ when $r>0$, we have $\alpha_j \leq (1-\lambda)\left(1-\lambda'\right)^{j-1}$.
    From~\cite{schenato2008optimal}, we have $\sum_{j=1}^{\infty} (1-\lambda')^jc(j) < \infty$ iff $(1-\lambda') \rho^2(A)<1$.
    Thus, it is easy to prove that $\nu(i) < \infty$ if \eqref{stability_condition} holds.

    From \eqref{equation_group}, $D$ can be further derived after simplifications as
    \begin{equation}
    D = \frac{1}{1-\sum_{i=1}^{\infty}\beta_i} \left(\sum_{i=1}^{\infty} \beta_i (1-g(0))c(0) + \sum_{i=1}^{\infty} \beta_i \nu(i) \right).
    \end{equation}
    As $\sum_{i=1}^{\infty}\! \beta_i \!=\!g(0)\!<\! 1$, $D$ is bounded as long as $\sum_{i=1}^{\infty} \!\beta_i \nu(i) \!<\! \infty$. Since $\alpha_i$, $\beta_i \leq (1-\lambda)(1-\lambda')^{i-1}$, after some simplifications, we have
    \begin{equation}
    \begin{aligned}\label{temp}
    \sum_{i=1}^{\infty} \beta_i \nu(i) 
    \leq 
    \eta \sum_{j=1}^{\infty} (1-\lambda')^{j} c(j)+
    \eta^2\sum_{j=2}^{\infty}  (j-1) (1-\lambda')^{j} c(j),
    \end{aligned}
    \end{equation}
    where $\eta = (1-\lambda')/(1-\lambda)$.
    It can be proved that $\sum_{j=2}^{\infty}  (j-1) (1-\lambda')^{j} c(j)$ is bounded if $\sum_{j=1}^{\infty} (1-\lambda')^{j} c(j)$ is bounded. Again, using the result that 
    $\sum_{j=1}^{\infty} (1-\lambda')^jc(j) < \infty$ iff $(1-\lambda') \rho^2(A)<1$ in \cite{schenato2008optimal}, $\sum_{i=1}^{\infty} \beta_i \nu(i)<\infty$ if $(1-\lambda') \rho^2(A)<1$, yielding the proof of the bounded expected first passage cost with condition~\eqref{stability_condition}. Similarly, we can verify that the expected first passage time
    is also bounded.

    \section*{Appendix B: Proof of Theorem~\ref{theorem:switching}}
    	The switching property is equivalent to the monotonicity of the optimal policy in $r$ if $q$ is fixed and in $q$ if $r$ is fixed. The monotonicity can be proved by verifying the following conditions (see Theorem 8.11.3 in \cite{puterman2014markov}).
    
    (1) $c(s,a)$ is nondecreasing in $s$ for all $a \in \mathbb{A}$;
    
    (2) $c(s,a)$ is a superadditive function on $\mathbb{S} \times \mathbb{A}$;
    
    (3) $q(s'|s,a)= \sum_{i=s'}^{\infty}\mathbb{P}\left[i|s,a\right]$ is nondecreasing in $s$ for all $s'\in \mathbb{S}$ and $a \in \mathbb{A}$;
    
    (4)  $q(s'|s,a)$ is a superadditive function on $\mathbb{S} \times \mathbb{A}$ for all  $s'\in \mathbb{S}$.
    
    We first prove the monotonicity in $r$ with $q$ fixed. The state $s$ is ordered by $r$, i.e., if $r^-\leq r^+$, we define $s^- \leq s^+$ with $s^-=(r^-,q)$ and $s^+=(r^+,q)$. 
	From the definition of one-stage cost, $c(s,a)$ is increasing in $q$. Therefore, condition (1) can be easily verified.        
    For condition (2), the superadditive function is defined in (4.7.1) of \cite{puterman2014markov}. A function $f(x,y)$ is superadditive for $x^{-} \leq x^{+}$ and $y^{-} \leq y^{+}$, if
    $f(x^{+},y^{+})+f(x^{-},y^{-}) \geq f(x^{+},y^{-})+f(x^{-},y^{+})$. Then, condition (2) can be easily verified as $c(s,a)$ is independent of $a$.
    
    Given the current state $s = (r,q)$, from \eqref{s'_1} and \eqref{s'_2}, the next possible states are $s_0 \triangleq (0,0)$, $s_1 \triangleq (0,q+1)$, $s_2 \triangleq (r+1,r+1)$ and $s_3 \triangleq (r+1,q+1)$.
   Let $s'\triangleq\left\lbrace(r',q'): q\in \mathbb{N}_0 \right\rbrace$.
    If $r'\leq r$, we define $s' \preceq s$ with $s=(r,q)$. 
    Based on \eqref{s'_1} and \eqref{s'_2}, $q(s'|s,a)$ with different actions are given as:\\
$
q(s'|s,a\!=\!0)\!=\!\begin{cases}
1, &\mbox{if } s' \preceq s_0\\
0, &\mbox{otherwise }
\end{cases}
$,
and
$
q(s'|s,a\!=\!1)\!=\!\begin{cases}
1, &\mbox{if } s' \preceq s_2\\
0, &\mbox{otherwise}
\end{cases}.
$
%
    Therefore, condition (3) can be easily verified.
    
    For condition (4),    
    let $s^{+} = (r^{+},q)$, $s^{-} = (r^{-},q)$,  $r^{+} \geq r^{-}$ and  $a^{+} \geq a^{-}$ Then, we need to verify if
    $
    q(s'|s^{+},a^{+})+q(s'|s^{-},a^{-}) \geq q(s'|s^{+},a^{-})+q(s'|s^{-},a^{+}).
    $
    Based on the definitions of $q(s'|s,a)$, $s'$ and $s_i$, $i=0,1,2,3$, 
    condition (4) can be verified straightforwardly.
	As all four conditions hold, the monotonicity of the optimal policy in $r$ is proved. Similarly, the monotonicity of the optimal policy in $q$ can be proved. 

    \balance
    
	\ifCLASSOPTIONcaptionsoff
	\newpage
	\fi


%
\bibliographystyle{IEEEtran}

\end{document}